\newtheorem{theorem}{Theorem}[section]
\newtheorem{proposition}[theorem]{Proposition}
\theoremstyle{remark}
\newtheorem{rem}{Remark}
\newcommand{\rH}{\mathcal{H}}
\font\timesept=cmr7
\newcommand{\CC}{\mathbb{C}}
\newcommand{\NN}{\mathbb{N}}
\newcommand{\RR}{\mathbb{R}}
\def\ol{\overline}
\newcommand\unit{\hbox{\rm 1\kern-2.8truept l}}
\def\[{{\mathord{[\![}}}
\def\]{{\mathord{]\!]}}}
\def\b{\beta}
\def\td{\widetilde}
\begin{document}

\title{Entanglement of Bipartite Quantum Systems\\driven by Repeated Interactions\footnote{Work supported by ANR project ``HAM-MARK" N${}^\circ$ ANR-09-BLAN-0098-01}}

\author{S. Attal${}^{{}^1}$, J. Deschamps${}^{{}^2}$ and  C. Pellegrini${}^{{}^3}$}

\date{}

\maketitle

\centerline{\timesept ${}^{{}^1}$ Universit\'e de Lyon}
\vskip -1mm
\centerline{\timesept Universit\'e de Lyon 1, C.N.R.S.}
\vskip -1mm
\centerline{\timesept Institut Camille Jordan}
\vskip -1mm
\centerline{\timesept 21 av Claude Bernard}
\vskip -1mm
\centerline{\timesept 69622 Villeubanne cedex, France}
\vskip 2mm
\centerline{\timesept ${}^{{}^2}$ Universit\`a degli Studi di Genova}
\vskip -1mm
\centerline{\timesept Dipartimento di Matematica}
\vskip -1mm
\centerline{\timesept Via Dodecaneso 35}
\vskip -1mm
\centerline{\timesept 16146 Genova, Italy}
\vskip 2mm
\centerline{\timesept ${}^{{}^3}$  Institut de Math\'ematiques de Toulouse }
\vskip -1mm
\centerline{\timesept Laboratoire de Statistique et de Probabilit\'e}
\vskip -1mm
\centerline{\timesept Universit\'e Paul Sabatier (Toulouse III)}
\vskip -1mm
\centerline{\timesept 31062 Toulouse Cedex 9, France}

\begin{abstract}
We consider a non-interacting bipartite quantum system $\mathcal H_S^A\otimes\mathcal H_S^B$ undergoing repeated quantum interactions with an environment modeled by a chain of independant quantum systems interacting one after the other with the bipartite system. The interactions are made so that the pieces of environment interact first with $\mathcal H_S^A$ and then with $\mathcal H_S^B$. Even though the bipartite systems are not interacting, the interactions with the environment create an entanglement. We show that, in the limit of short interaction times, the environment creates an effective interaction Hamiltonian between the two systems. This interaction Hamiltonian is explicitly computed and we show that it keeps track of the order of the successive interactions with $\mathcal H_S^A$ and $\mathcal H_S^B$.
Particular physical models are studied, where the evolution of the entanglement can be explicitly computed.  We also show the property of return of equilibrium and thermalization for a family of examples.
\end{abstract}

\section{Introduction}
Initially introduced in \cite{A-P} in order to justify the quantum Langevin equations, Quantum Repeated Interaction models are currently a very active line of research. They have found various applications: quantum trajectories \cite{P1,P2,P3,Bauer1,Bauer2}, thermalization of quantum systems \cite{DB-P, A-J}. Moreover several famous physical experiments, such as the ones performed by S. Haroche's team, correspond exactly to Quantum Repeated Interaction schemes (\cite{Har, Har2}).

Repeated Quantum Interactions are particular discrete time evolutions of Open Quantum Systems where the typical situation is the one of a quantum system $\mathcal H_S$ in contact with an infinite chain of quantum systems $\bigotimes_k\mathcal H_k$. Each quantum system $\mathcal H_k$ interacts with $\mathcal H_S$ one after the other during a time duration $h$. More concretely, $\mathcal H_1$ interacts with $\mathcal H_S$ during a time duration $h$ and then stops interacting, the second quantum system $\mathcal H_2$ then interacts with $\mathcal H_S$ and so on. The continuous time limit, when $h$ goes to zero, has been studied in detail in \cite{A-P}. Remarkably, it has been shown that such discrete time models, under suitable renormalization, converge to the quantum Langevin equations,  that is, quantum stochastic differential equations.

\smallskip
In this article, we concentrate on the following particular situation. We consider that the system $\mathcal H_S$ is composed of two quantum systems $\mathcal H^A_S$ and $\mathcal H^B_S$ which do not interact together. This ``uncoupled'' system undergoes Quantum Repeated Interactions as follows. Each piece $\mathcal H_k$ of the environment interacts first with $\mathcal H^A_S$ during a time duration $h$ without interacting with $\mathcal H^B_S$  and then interacts with $\mathcal H^B_S$ without interacting anymore with $\mathcal H^A_S$. For example, in the spirit of the experiments driven by Haroche et al (cf \cite{Har, Har2}), the bipartite system can been thought of as two isolated cavities with a magnetic field trapping several photons in each cavities. A chain of two-level systems (such as Rydberg atoms in some particular state, as in the experiment) are passing through the cavities, one after the other, creating this way an entanglement in between the photons of each cavities.

\smallskip
Our work is motivated by entanglement considerations. While the systems $\mathcal H^A_S$ and $\mathcal H^B_S$ are not initially entangled and while there is no direct interaction between them, our special scheme of Quantum Repeated Interactions creates naturally entanglement. More precisely, we show that this scheme of interaction, in the continuous-time limit, is equivalent to a usual Quantum Repeated Interaction model where, actually, $\mathcal H^A_S$ interacts with $\mathcal H^B_S$. In other words, our special scheme of Quantum Repeated Interactions creates spontaneously an effective interaction Hamiltonian between $\mathcal H^A_S$ and $\mathcal H^B_S$. We explicitly compute the associated interaction Hamiltonian.

\smallskip
The article is structured as follows. In Section 2, the bipartite Repeated Quantum Interaction model is described in details. In Section 3, we focus on the continuous-time limit, that is, when the time interaction between the systems $\mathcal H_k$ and $\mathcal H_S=\mathcal H_S^A\otimes\mathcal H_S^B$ goes to zero. More precisely, we derive the quantum stochastic differential equation representing the limit evolution. This allows to identify the effective coupling Hamiltonian. Section 4 is devoted to the study of the evolution of the entanglement between $\mathcal H^A_S$ and $\mathcal H^B_S$ in the physical example of the spontaneous emission of a photon. In Section 5, we derive the Lindblad generator of the limit evolution in the case of a thermal environment, represented by a Gibbs state. We then study the property of return to equilibrium, that is, the asymptotic convergence for all initial state toward an invariant state.

\section{Description of the Bipartite Model}

This section is devoted to the presentation of the model. As announced, we consider a quantum system $\mathcal H_S=\mathcal H^A_S\otimes\mathcal H^B_S$, where $\mathcal H^A_S$ and $\mathcal H^B_S$ do not interact together. This means that the free evolution of $\mathcal H_S$ is given by
$$H^A\otimes I+I\otimes H^B,$$
where $H^A$ and $H^B$ are the free Hamiltonian of $\mathcal H_S^A$ and $\mathcal H_S^B$. This system is coupled to an environment made of an infinite chain of identical and independent systems : 
$$T \Phi=\bigotimes_{k \in \NN^*} \mathcal H_k,$$
where $ \mathcal H_k= \mathcal H$ for all $k$.

The interaction between $\mathcal H_S$ and the infinite chain is described by a model of Quantum Repeated Interactions, that is, the copies of $\mathcal H$ interact ones after the others with $\mathcal H_S$ and then stop interacting. A single interaction between a copy of $\mathcal H$ and $\mathcal H_S=\mathcal H^A_S\otimes\mathcal H^B_S$ is described by a particular mechanism, the interaction is divided into two parts: the system $\mathcal H$ interacts first with $\mathcal H^A_S$ during a time $h$ without interacting with $\mathcal H_S^B$, then the system $\mathcal H$ interacts with $\mathcal H^B_S$ during a time $h$ without interacting with $\mathcal H_S^A$.

In terms of Hamiltonians, the evolution of the coupled system $\mathcal H^A_S\otimes\mathcal H^B_S\otimes\mathcal H$ can be described as follows. For the first interaction, we consider a Hamiltonian of the form
\begin{equation}\label{eq:unitaireA}
H^A_{tot}=H^A\otimes I\otimes I+I\otimes I\otimes H^R+\lambda\, H_{I}^A,
\end{equation}
where $H_R$ represents the free Hamiltonian of $\mathcal H$, the operator $H_{I}^A$ represents the interaction Hamiltonian between $\mathcal H$ and $\mathcal H^A_S$ (this operator acts as the identity operator on $\mathcal H_S^B$) and $\lambda$ is a coupling constant. In a similar way, the second evolution is described by
\begin{equation}\label{eq:unitaireB}
H^B_{tot}=I\otimes H^B\otimes I+I\otimes I\otimes H^R+\lambda'\, H_{I}^B,
\end{equation}
where this time $H_{I}^B$ acts non-trivially only on $\mathcal H$ and $\mathcal H_S^B$ and acts as the identity operator on $\mathcal H_S^A$. The terms $\lambda'$ represents also the coupling constant of the second interaction. Each operator $H^A_{tot}$ and $H^B_{tot}$ gives rise to a unitary evolution
\begin{eqnarray}
U^A&=&e^{-ih H^A_{tot}},\qquad
U^B=e^{-ih H^B_{tot}}.
\end{eqnarray}
Since the space $\mathcal H$ interacts first with $\mathcal H_S^A$ and then $\mathcal H_S^B$, the composed evolution is then given by
\begin{equation}\label{eq:unitairebipartite}
U=U^BU^A.
\end{equation}

\smallskip
Let us stress that usually the interaction between $\mathcal H$ and $\mathcal H_S$ should have been given by a Hamiltonian of the form $\widetilde H_{tot}=H^A\otimes I\otimes I+I\otimes H^B\otimes I+ \,I\otimes I\otimes H^R+\widetilde\lambda\,\widetilde{H}_I,$ where $\widetilde{H}_I$ would have represented the interaction Hamiltonian. This would have given rise to a usual unitary operator of the form 
\begin{equation}\label{eq:unitaireusuel}
\widetilde U=e^{-i2h\widetilde H_{tot}}.
\end{equation} 
In our model, since $H_{I}^A$ and $H_{I}^B$ do not commute, we cannot directly put the unitary \eqref{eq:unitairebipartite} in the form of \eqref{eq:unitaireusuel}. In the following, we shall see (in the continuous-time limit) that our model with $U=U^BU^A$ is equivalent to a standard model where actually there is an effective interaction between $\mathcal H_S^A$ and $\mathcal H_S^B$. 

\smallskip
Let us make precise now the form of the interaction Hamiltonians involved in \eqref{eq:unitaireA} and \eqref{eq:unitaireB}.
We assume in this work that the Hilbert spaces involved in the interaction, that is, $\mathcal H$, $\mathcal H_S^A$ and $\mathcal H_S^B$ are finite dimensional. The Hilbert space $\mathcal H$ is supposed here to be $\CC^{N+1}$. We consider on $\CC^{N+1}$ an orthonormal basis made of eigenvectors of $H^R$, denoted by $\left\{e_0, e_1, \ldots, e_N \right\}$, where the vector $e_0$ is understood as the ground state of $\mathcal H$. 
Consider the associated canonical operators $a^i_j$ defined by 
$$ 
a^i_j e_k = \delta_{ik} e_j\,,
$$
for all $i, j$ and $k$ in $\left\{0, ..., N \right\}$. 
With this notation, we have
$$
H^R= \sum_{j=0}^{N} \lambda_j\, a^j_j\,,
$$
where the $\lambda_j$'s are the eigenvalues of $H^R$. 

For the interaction Hamiltonians we shall consider operators of the form
\begin{eqnarray}
H^A_I&=& \sum_{j=1}^N V_j \otimes I \otimes a^0_j + V_j^* \otimes I \otimes a^j_0\,,\\H^B_I&=& \sum_{j=1}^N I \otimes W_j \otimes a^0_j + I \otimes W_j^* \otimes a^j_0\,,
\end{eqnarray}
where the $V_j$'s are operators on $\mathcal H^A_S$ and the $W_j$'s on $\mathcal H^B_S$.

As usual in the Schr\"odinger picture, the evolutions of states on $\mathcal H^A_S\otimes\mathcal H^B_S\otimes\mathcal H$ are given by
$$\rho \longmapsto U \rho\ U^*\,,$$
where we recall that $U$ takes the particular form $U=U^BU^A$ in our context.

\smallskip
Now, we are in the position to describe the whole interaction between $\mathcal H^A_S\otimes\mathcal H^B_S$ and the chain $\bigotimes_k\mathcal H_k$, with $\mathcal H_k=\mathcal H=\CC^{N+1}$. The scheme is as follows. The first copy $\mathcal H_1$ interacts with $\mathcal H^A_S\otimes\mathcal H^B_S$ during a time $2h$ while the rest of the chain remains isolated. Then, the first copy disappears and the second copy comes to interact and so on... Before making precise the evolution, we need to introduce a notation for the operators acting only on $\mathcal H_n$ and being the identity operator on the rest of the whole space. If $A$ is an operator on $\mathcal H$, we extend it as an operator on $\bigotimes_k\mathcal H_k$ but acting  non-trivially only on $\mathcal H_n$ by putting
$$
A(n)=\bigotimes_{k=1}^{n-1}I\otimes A\otimes \bigotimes_{k>n+1}I\,.
$$
Mathematically, on $\mathcal H^A_S\otimes\mathcal H^B_S\bigotimes_k\mathcal H_k$, we consider the family of unitary operators $(U_n)_{n \in \NN^*}$,  where $U_n$ acts as $U$ on $\mathcal H^A_S \otimes \mathcal H^B_S$ and the $n$-th copy of $\rH$  and as the identity on the rest of the chain. The operator $U_n$ represents actually the interaction between $\mathcal H^A_S\otimes\mathcal H^B_S$ and $\mathcal H_n$. More precisely, the operator $U_n$ is defined as $U_n=U^B_nU^A_n$, where for example $U^A_n=e^{-ih H^A_{tot,n}}$ with
\begin{equation}
H^A_{tot,n}=H^A\otimes I\otimes I+I\otimes I\otimes H_R(n)+\lambda\sum_{j=1}^N V_j \otimes I \otimes a^0_j(n) + V_j^* \otimes I \otimes a^j_0(n)\,.
\end{equation}
The corresponding similar description holds for $U^B_n$. 

The whole evolution is finally described by the family of unitary operators $(V_n)_{n \in \NN^*}$ defined by
\begin{equation}\label{eq:suiteunitaire}
V_n= U_n U_{n-1} \dots U_1\,.
\end{equation}
As a consequence, if the initial state of $\mathcal H^A_S\otimes\mathcal H^B_S\bigotimes_k\mathcal H_k$ is $\rho_0$, the state after $n$ interactions is
$$\rho_0 \longmapsto V_n \,\rho_0\, V_n^*.$$ 
Now that the discrete-time evolution is clearly described, we shall now investigate the continuous-time limit of such evolutions in the next section.

\section{Effective Interaction Hamiltonian}

This section is devoted to derive the continuous time limit of our special scheme of repeated interactions, i.e. the limit when the time parameter $h$ goes to $0$. In order to obtain a relevant limit, the authors of \cite{A-P} have shown that the total Hamiltonian has to be properly rescaled in terms of $h$. In particular, it is crucial to strengthen the interaction in order to see its effect at the limit. More precisely, translated in our context, the total Hamiltonians have to be of the following form:
\begin{eqnarray}
H^A_{tot}&=& H^A \otimes I \otimes I+ I \otimes I \otimes H^R +\dfrac{1}{\sqrt h} \sum_{i=1}^N \left(V_j \otimes I \otimes a^0_j + V_j^* \otimes I \otimes a^j_0\right)\,,\\ 
H^B_{tot}&=& I \otimes H^B \otimes I+ I \otimes I \otimes H^R + \dfrac{1}{\sqrt h}\sum_{j=1}^N I\left( \otimes W_j \otimes a^0_j + I \otimes W_j^* \otimes a^j_0\right)\,.
\end{eqnarray}
Let us stress that in the above expressions the coupling constants appearing in \eqref{eq:unitaireA} and \eqref{eq:unitaireB} have been replaced by  $1/\sqrt{h}$. We denote by $\lfloor\,\cdot\,\rfloor$ the floor function. One can show that the operators $(V_{\lfloor t/h\rfloor})_t$  defined in \eqref{eq:suiteunitaire} converge to a family of operators $(U_t)_t$ satisfying a particular quantum stochastic differential equation. More precisely, in \cite{A-P}, it is shown that we can embed the space $T\Phi$ into an appropriate Fock space $\Phi$. In this context the discrete time interaction, described by $(V_{\lfloor t/h\rfloor})_t$, appears naturally as an approximation of a continuous one described by a family of unitary operators $(U_t)_t$ acting on $\Phi$ (i.e. $(V_{\lfloor t/h\rfloor})_t$ converges to $(U_t)$ when $h$ goes to zero). In particular $(U_t)$ is the solution of a particular \textit{quantum stochastic differential equation} describing continuous-time interaction between small system $\mathcal H_S$ and the quantum field $\Phi$. In our context, the complete description of the Fock space $\Phi$ and the detail of the convergence result are not necessary.  Nevertheless the ``created" interaction Hamiltonian appears naturally in the expression of the limit $(U_t)$.  In our setup, this can be expressed as follows.

\begin{theorem}\label{main}
When the interaction time $h$ goes to $0$, the family $(V_{\lfloor t/h\rfloor})_t$ converges strongly to a family of unitary operators $(U_t)$ which is the solution of the quantum stochastic differential equation
\begin{multline}\label{QSDE}
dU_t =  \hfill\\
\hfill =\left[-i (H^{A}\otimes I + I \otimes H^{B} + 2 \lambda_0 \,I \otimes I) - \dfrac12  \sum_j V_j^* V_j \otimes I + I \otimes W_j^* W_j+ 2 V_j \otimes W_j^* \right] U_t \,dt \\
\hfill-i \sum_{i=1}^N (V_j \otimes I + I \otimes W_j) U_t \ d a^0_j (t) + (V_j^* \otimes I + I \otimes W_j^*) U_t\  d a^j_0(t)\,.
\end{multline}
\end{theorem}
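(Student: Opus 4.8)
The plan is to reduce the statement to the convergence theorem of \cite{A-P}, which asserts that a family of discrete repeated-interaction unitaries $(V_{\lfloor t/h\rfloor})_t$ converges strongly to the solution of a quantum stochastic differential equation whose coefficients are read off from the asymptotic expansion, in powers of $\sqrt h$, of the one-step unitary $U=U^BU^A$ written in the canonical basis $\{a^i_j\}_{0\le i,j\le N}$ of operators on $\mathcal H=\CC^{N+1}$. Writing $U=\sum_{i,j}U^i_j\otimes a^i_j$ with $U^i_j$ operators on $\mathcal H^A_S\otimes\mathcal H^B_S$, the dictionary of \cite{A-P} identifies $\lim_{h\to0}(U^0_0-I)/h$ with the coefficient of $dt$, $\lim_{h\to0}U^0_j/\sqrt h$ and $\lim_{h\to0}U^j_0/\sqrt h$ (for $j\ge1$) with the coefficients of $da^0_j(t)$ and $da^j_0(t)$, and $\lim_{h\to0}(U^i_j-\delta_{ij}I)$ (for $i,j\ge1$) with the gauge coefficients. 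The whole problem thus becomes the computation of these four families of limits, and the genuine content of the theorem lies in the $dt$-coefficient.

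First I would Taylor-expand each factor. Writing $H^A_{tot}=K^A+\tfrac1{\sqrt h}L^A$ and $H^B_{tot}=K^B+\tfrac1{\sqrt h}L^B$, with $K^A=H^A\otimes I\otimes I+I\otimes I\otimes H^R$, $L^A=\sum_j(V_j\otimes I\otimes a^0_j+V_j^*\otimes I\otimes a^j_0)$ and the analogous $K^B,L^B$, the exponents $-ihH^\bullet_{tot}=-i\sqrt h\,L^\bullet-ih\,K^\bullet$ carry a term of order $\sqrt h$ and one of order $h$. Since everything is finite-dimensional the exponential series converges and, uniformly in operator norm,
$$U^A=I-i\sqrt h\,L^A-ih\,K^A-\tfrac h2(L^A)^2+O(h^{3/2}),$$
and likewise for $U^B$. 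Multiplying and keeping terms up to order $h$ gives
$$U=I-i\sqrt h\,(L^A+L^B)-ih\,(K^A+K^B)-\tfrac h2\big((L^A)^2+(L^B)^2\big)-h\,L^BL^A+O(h^{3/2}).$$
The decisive feature is the cross term $-h\,L^BL^A$: because $\mathcal H$ meets $\mathcal H^A_S$ before $\mathcal H^B_S$, the product is $L^BL^A$ and not $L^AL^B$, which is precisely the ordering the theorem claims to remember.

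Next I would compute the coefficients using the multiplication rules $a^j_0a^0_k=\delta_{jk}a^0_0$, $a^0_ja^k_0=a^k_j$ and $a^0_ja^0_k=a^j_0a^k_0=0$ (for $j,k\ge1$), projecting each term onto the basis. The identity gives $\delta_{ij}I$; the $\sqrt h$-term contributes $-i\sqrt h(V_j\otimes I+I\otimes W_j)$ to $a^0_j$ and $-i\sqrt h(V_j^*\otimes I+I\otimes W_j^*)$ to $a^j_0$, hence after dividing by $\sqrt h$ the stated noise coefficients. For the $a^0_0$ block one finds that $-ih(K^A+K^B)$ yields $-i(H^A\otimes I+I\otimes H^B+2\lambda_0\,I)$, that $-\tfrac h2((L^A)^2+(L^B)^2)$ yields $-\tfrac12\sum_j(V_j^*V_j\otimes I+I\otimes W_j^*W_j)$, and that the cross term $-h\,L^BL^A$ yields exactly $-\sum_j V_j\otimes W_j^*$; together, after division by $h$, these reproduce the bracket $-\tfrac12\sum_j(V_j^*V_j\otimes I+I\otimes W_j^*W_j+2V_j\otimes W_j^*)$ of the statement. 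Finally the blocks $a^i_j$ with $i,j\ge1$ receive only $O(h)$ corrections beyond $\delta_{ij}I$, so their limit is $\delta_{ij}I$ and no gauge term survives, consistent with the absence of $da^i_j(t)$ in \eqref{QSDE}.

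The routine but delicate part, and the main obstacle, is the order-$h$ bookkeeping of the cross term, where the non-commutation of $L^A$ and $L^B$ produces the effective coupling $2V_j\otimes W_j^*$; one must also check that the discrete coefficients $U^i_j(h)$ meet the precise scaling hypotheses of \cite{A-P} (in particular the uniformity of the $O(h^{3/2})$ remainders) so that the strong-convergence theorem applies verbatim and the limit is genuinely unitary. I would close by verifying that the resulting coefficients satisfy the Hudson--Parthasarathy unitarity relations: this both confirms the computation and exhibits the surviving $V_j\otimes W_j^*$ asymmetry as a self-adjoint interaction Hamiltonian spontaneously created between $\mathcal H^A_S$ and $\mathcal H^B_S$.
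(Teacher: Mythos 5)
Your proposal is correct and follows essentially the same route as the paper: both reduce the statement to the convergence theorem of \cite{A-P} by Taylor expanding $U=U^BU^A$ to order $h$, reading off the coefficients $U^i_j(h)$ in the basis $\{a^i_j\}$, and checking the required scalings $h^{\epsilon_{i,j}}$, with the cross term (your $-h\,L^BL^A$, the paper's off-diagonal matrix products) producing the $2V_j\otimes W_j^*$ coupling. The only difference is bookkeeping --- you expand abstractly via the split $H^\bullet_{tot}=K^\bullet+\tfrac1{\sqrt h}L^\bullet$ and the multiplication rules of the $a^i_j$, whereas the paper writes $U^A$ and $U^B$ as explicit $(N+1)\times(N+1)$ matrices with operator entries and multiplies them --- and your coefficients agree with \eqref{uh1}--\eqref{uh5}.
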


Let us stress that in the expression \eqref{QSDE}, the terms $(a^0_j (t))$ and $(a^j_0(t))$ are quantum noises. They are operators on the limited associated Fock space $\Phi$. The exact definition of these operators is not needed and we refer to \cite{Mey} for complete references.

\begin{proof} In order to prove this result we shall apply the Theorem 13 of \cite{A-P}. The essential step is to identify the relevant terms when we expand 
$$
U= U^B U^A=e^{-ihH^B_{tot}}\,e^{-ihH^A_{tot}}\,,
$$
in terms of $h$.  More precisely, on $\mathcal H^A_S\otimes\mathcal H^B_S\otimes\mathcal H$, one can decompose $U$ as 
\begin{equation}\label{decompose}
U=\sum_{i,j}U_j^i(h)\otimes a_j^i,
\end{equation}  
where $U_j^i(h)$ are operators on $\mathcal H^A_S\otimes\mathcal H^B_S$. This way, we shall find the asymptotic expression of $U_j^i(h)$ and next apply the convergence results of \cite{A-P} to derive Eq. \eqref{QSDE}. 

In order to obtain the asymptotic expression of $U_j^i(h)$, let us study $H^A_{tot}$ and $H^B_{tot}$ in details. Using a similar decomposition as \eqref{decompose}, the operators $H^A_{tot}$ and $H^B_{tot}$ can be seen as matrices whose the coefficients are operators on $\mathcal H^A_S\otimes\mathcal H^B_S$. In particular, they can be written as follows
$$
H^A_{tot}=\small{\left(\begin{array}{ccccc}
H^A\otimes I + \lambda_0 I \otimes I & \frac1{\sqrt h} V_1^* \otimes I &  \frac1{\sqrt h} V_2^* \otimes I & \cdots & \frac1{\sqrt h} V_N^*\otimes I\\
\frac1{\sqrt h} V_1\otimes I& H^A\otimes I + \lambda_1 I \otimes I & 0 & \cdots & 0\\
\frac1{\sqrt h} V_2\otimes I&0 & H^A\otimes I + \lambda_2 I \otimes I &  \ddots & 0\\
\vdots &\vdots & \ddots &\ddots & \vdots\\
\frac1{\sqrt h} V_N\otimes I&0 &0 &\cdots & H^A\otimes I + \lambda_N I \otimes I
\end{array}\right)} 
$$
and
$$ 
H^B_{tot}=\small{\left(\begin{array}{ccccc}
I \otimes H^B+ \lambda_0 I \otimes I & \frac1{\sqrt h} I \otimes W_1^*  &  \frac1{\sqrt h} I \otimes W_2^* & \cdots & \frac1{\sqrt h} I \otimes W_N^*\\
\frac1{\sqrt h} I \otimes W_1& I \otimes H^B + \lambda_1 I \otimes I & 0 & \cdots & 0\\
\frac1{\sqrt h} I \otimes W_2&0 & I \otimes H^B + \lambda_2 I \otimes I &  \ddots & 0\\
\vdots &\vdots & \ddots &\ddots & \vdots\\
\frac1{\sqrt h} I \otimes W_N&0 &0 &\cdots &I \otimes H^B + \lambda_N I \otimes I
\end{array}\right)}\,.
$$
Now, we can make the Taylor expansion of $U$.  We obtain
$$
U^A=\small{\left(\begin{array}{ccccc}
D^A _0 & -i\sqrt h V_1^* \otimes I& -i\sqrt h V_2^* \otimes I&  \cdots & -i \sqrt h V_N^*\otimes I\\
-i\sqrt h V_1\otimes I&D^A_1 & -\frac12 h V_1 V_2^*\otimes I& \cdots & -\frac12 h V_1 V_N^* \otimes I \\
\vdots &-\frac12 h V_2 V_1^*\otimes I & D^A_2 & \ddots & \vdots \\
\vdots & \vdots &\ddots & \ddots &\vdots\\
-i\sqrt h V_N\otimes I & -\frac12 h V_N V_1^* \otimes I & -\frac12 h V_N V_2^* \otimes I  & \cdots& D^A_N 
\end{array}\right)}+ O(h^{3/2}) 
$$
and
$$
U^B=\small{\left(\begin{array}{ccccc}
D^B_0 & -i \sqrt h  I \otimes W_1^*& -i \sqrt h  I \otimes W_2^*&  \cdots & -i \sqrt h  I \otimes W_N^*\\
-i\sqrt h I \otimes W_1&D^B_1 &  -\frac12 h I \otimes W_1 W_2^*& \cdots &  -\frac12 h I \otimes W_1 W_N^*\\
\vdots & -\frac12 h I \otimes W_2 W_1^*& D^B_2 & \ddots & \vdots \\
\vdots & \vdots &\ddots & \ddots &\vdots\\
-i\sqrt h I \otimes W_N&  -\frac12 h I \otimes W_N W_1^*  &  -\frac12 h I \otimes W_N W_2^* & \cdots& D^B_N 
\end{array}\right)} + O(h^{3/2})
$$
where the diagonal coefficients of $U^A$ and $U^B$ are for all $j=1, \ldots, N$,
\begin{align*}
D^A_0 &= I \otimes I -ih H^A\otimes I -ih \lambda_0 I \otimes I -\frac12 h\sum_j V_j^* V_j \otimes I\,,\\
D^A_j &= I \otimes I -ih H^A\otimes I -ih \lambda_j I \otimes I  - \frac12 h V_j V_j^* \otimes I\,,\\
D^B_0 &= I \otimes I -ih I \otimes H^B -ih \lambda_0 I \otimes I-\frac12 h\sum_j I \otimes  W_j^* W_j\,,\\
D^B_j &= I \otimes I -ih I \otimes H^B   -ih \lambda_j I \otimes I- \frac12 h I \otimes W_j W_j^*\,.
\end{align*}
This way, computing $U^BU^A$ in asymptotic form, the coefficients $U^i_j(h)$ of the matrix $U$ for $i,j=0, \ldots, N$ are, up to terms in $h^{3/2}$ or higher
\begin{align}
U^0_0&= I\otimes I -ih(H_{A}\otimes I + I \otimes H_{B} + 2 \lambda_0 I \otimes I)\nonumber\\
&\hphantom{ I\otimes I -ih(H_{A}\otimes I + I \otimes H_{B} + }- \frac12 h  \sum_j \left(V_j^* V_j \otimes I + I \otimes W_j^* W_j+ 2 V_j \otimes W_j^*\right)\,,\label{uh1}\\
U_0^j&=-i\sqrt h(V_j^* \otimes I + I \otimes W_j^*) \,,\label{uh2}\\
U_j^0&=-i\sqrt h(V_j\otimes I + I \otimes W_j) \,,\label{uh3}\\
U_j^j&=I \otimes I -ih(H_{A}\otimes I + I \otimes H_{B} + 2 \lambda_j I \otimes I) \nonumber\\
&\hphantom{I\otimes I -ih(H_{A}\otimes I + I \otimes H_{B} + }-\frac12 h (V_jV_j^*\otimes I +I\otimes W_j W_j^*+2V_j^* \otimes W_j )\,,\label{uh4}\\
U_j^k&= -\frac12 h( V_j V_k^*\otimes I + I \otimes W_j W_k^*+2V_k^*\otimes W_j)\,.\label{uh5}
\end{align}
We can easily check that
$$
\lim_{h \rightarrow 0} \dfrac{U^i_j(h)-\delta_{ij} \,I\otimes I}{h^{\epsilon_{i,j}}}=L^i_j\,,
$$
where $\epsilon_{0,0}=1$, $\epsilon_{0,j}= \epsilon_{j,0}= 1/2$ and $\epsilon_{i,j}=0$ and where
\begin{align*}
L^0_0&= -i (H^{A}\otimes I + I \otimes H^{B} + 2 \lambda_0 I \otimes I) - \frac12  \sum_j V_j^* V_j \otimes I + I \otimes W_j^* W_j+ 2 V_j \otimes W_j^*\,,\\
L^j_0&= -i(V_j^* \otimes I + I \otimes W_j^*) \,,\\
L^0_j&= -i(V_j\otimes I + I \otimes W_j)\,,\\
L^i_j&=0.
\end{align*}
These are exactly the conditions of \cite{A-P} and the result follows.
\end{proof}

\smallskip
Now that we have derived Eq. \eqref{QSDE}, we are in the position to identify the interaction Hamiltonian which has been ``created" by the environment. To this end, we compare the limit equation \eqref{QSDE} with the one one could have obtained in a usual repeated quantum interaction scheme. 

\begin{theorem}\label{main2}
The quantum stochastic differential equation \eqref{QSDE} represents an evolution, on $\mathcal H_S^A\otimes \mathcal H_S^B$ coupled to a Fock space $\Phi$, which can be obtained from the continuous-time limit of a usual repeated interaction scheme with the following Hamiltonian on $\mathcal H_S^A\otimes \mathcal H_S^B\otimes \rH$
\begin{equation}\label{usualtotalhamiltonian}
\widetilde H_{tot}= H_0^{A,B}\otimes I+2 \,I\otimes I\otimes H^R+ \frac{1}{\sqrt{h}}\sum_j S_j \otimes a^0_j + S_j^* \otimes a^j_0\,, 
\end{equation}
where $S_j=V_j\otimes I + I \otimes W_j$ and where the free Hamiltonian of $\mathcal H_S^A\otimes \mathcal H_S^B$ is given by
\begin{equation}\label{usualfreehamiltonian}
H_0^{A,B}=H^{A}\otimes I + I \otimes H^{B} +\dfrac i2 \sum_j V_j^* \otimes W_j - V_j \otimes W_j^*\,.
\end{equation}

In particular the term 
$$
\frac{i}2\, \sum_j \left(V_j^* \otimes W_j - V_j \otimes W_j^*\right)
$$ 
represents an effective interaction Hamiltonian term created by the environment between $\mathcal H_S^A$ and $\mathcal H_S^B$.
\end{theorem}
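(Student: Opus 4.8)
The plan is to exploit the fact that, by Theorem 13 of \cite{A-P}, the continuous-time limit of a repeated interaction scheme is completely determined by the rescaled limits $L^i_j=\lim_{h\to 0}(U^i_j(h)-\delta_{ij}\,I\otimes I)/h^{\epsilon_{i,j}}$ of the coefficients of its one-step unitary. Theorem \ref{main} already records these limits for our special scheme $U=U^BU^A$. Hence it suffices to run the very same asymptotic analysis on the \emph{usual} one-step unitary $\widetilde U=e^{-ih\widetilde H_{tot}}$ attached to \eqref{usualtotalhamiltonian}, to compute its limit coefficients $\widetilde L^i_j$, and to check that $\widetilde L^i_j=L^i_j$ for all $i,j$. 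Once this equality is established, the two schemes share the same limit equation \eqref{QSDE}, which is exactly the assertion.

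First I would put $\widetilde H_{tot}$ in matrix form over the $a^i_j$, as was done for $H^A_{tot}$ and $H^B_{tot}$. It splits as $D+\tfrac1{\sqrt h}B$, where $D$ is diagonal with entries $H_0^{A,B}\otimes I+2\lambda_k\,I\otimes I$ and $B$ carries $S_j^*$ along the first row and $S_j$ down the first column, with $S_j=V_j\otimes I+I\otimes W_j$. Expanding $\widetilde U=e^{-ih\widetilde H_{tot}}$ in powers of $h^{1/2}$ up to order $h$ gives $\widetilde U=I-i\sqrt h\,B-ih\,D-\tfrac h2 B^2+O(h^{3/2})$. Reading off the entries, and using that $B^2$ is block-diagonal with $(0,0)$-entry $\sum_j S_j^*S_j$, I obtain $\widetilde L^0_j=-iS_j$, $\widetilde L^j_0=-iS_j^*$, $\widetilde L^i_j=0$ for $i,j\geq1$, and $\widetilde L^0_0=-i(H_0^{A,B}+2\lambda_0\,I\otimes I)-\tfrac12\sum_j S_j^*S_j$. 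The factor $2$ in front of $H^R$ in \eqref{usualtotalhamiltonian}, together with the interaction time $h$, is precisely what yields the coefficient $2\lambda_0$ and the correct $\sqrt h$-scaling of the noise terms.

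The off-diagonal comparison is immediate, since $S_j=V_j\otimes I+I\otimes W_j$ and $S_j^*=V_j^*\otimes I+I\otimes W_j^*$ coincide with $L^0_j$ and $L^j_0$ of Theorem \ref{main}. The decisive step is $\widetilde L^0_0$. Expanding $\sum_j S_j^*S_j=\sum_j(V_j^*V_j\otimes I+I\otimes W_j^*W_j)+\sum_j(V_j^*\otimes W_j+V_j\otimes W_j^*)$ and inserting the definition \eqref{usualfreehamiltonian} of $H_0^{A,B}$, the anti-Hermitian correction $\tfrac i2\sum_j(V_j^*\otimes W_j-V_j\otimes W_j^*)$ contributes $\tfrac12\sum_j(V_j^*\otimes W_j-V_j\otimes W_j^*)$ after multiplication by $-i$; adding the symmetric cross term $-\tfrac12\sum_j(V_j^*\otimes W_j+V_j\otimes W_j^*)$ coming from $-\tfrac12\sum_jS_j^*S_j$ leaves exactly $-\sum_jV_j\otimes W_j^*$. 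Thus $\widetilde L^0_0$ reduces to $L^0_0$ of \eqref{QSDE}, whose characteristic asymmetric contribution is $-\sum_jV_j\otimes W_j^*$ (written $-\tfrac12\sum_j 2\,V_j\otimes W_j^*$ there). This matches all coefficients, so $\widetilde U$ and $U=U^BU^A$ have the same limit, and the self-adjoint operator $\tfrac i2\sum_j(V_j^*\otimes W_j-V_j\otimes W_j^*)$ is read off as the genuine interaction term between $\mathcal H_S^A$ and $\mathcal H_S^B$.

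I expect the main obstacle to be precisely this bookkeeping in $\widetilde L^0_0$: one must correctly separate the Hermitian (dissipative, scheme-independent) part of $\sum_jS_j^*S_j$ from its cross terms, and recognize that only the anti-Hermitian combination $V_j^*\otimes W_j-V_j\otimes W_j^*$ can be absorbed into an effective self-adjoint Hamiltonian, while the residual $V_j\otimes W_j^*$ is forced to remain in the dissipative coefficient. Checking self-adjointness of the candidate interaction term, and confirming that the factor $2$ on $H^R$ is the unique choice making the $\lambda_0$-terms agree, are the remaining consistency verifications.
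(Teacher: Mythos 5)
Your proposal is correct and takes essentially the same route as the paper: both apply the convergence result of \cite{A-P} to the one-step unitary generated by $\widetilde H_{tot}$ of \eqref{usualtotalhamiltonian}, and then match its limit coefficients with those of Eq.~\eqref{QSDE}, the decisive algebraic point in each case being that $-iH_0^{A,B}-\tfrac12\sum_j S_j^*S_j$ reproduces the asymmetric term $-\sum_j V_j\otimes W_j^*$ because the anti-Hermitian half of the cross terms is absorbed into the self-adjoint $H_0^{A,B}$. The only cosmetic difference is the direction of the rewriting: you expand $\widetilde U=e^{-ih\widetilde H_{tot}}$ and check $\widetilde L^i_j=L^i_j$, whereas the paper starts from $L^0_0$ of \eqref{QSDE} and recasts it as $-i(H_0^{A,B}+2\lambda_0\,I\otimes I)-\tfrac12\sum_j S_j^*S_j$; the computation is identical.
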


\begin{proof}
With the expression of the Hamiltonian \eqref{usualtotalhamiltonian}, using again the results of  \cite{A-P} the continuous-time limit, $h$ goes to zero, gives rise to the QSDE
\begin{equation}\label{usualQSDE}
d\td U_t=L_0^0\td U_tdt+\sum_j L_0^j\td U_tda_0^j(t)+L_j^0\td U_tda_j^0(t)\,,
\end{equation}
where 
\begin{align*}
L^0_0&= -i( H_0^{A,B} + 2\lambda_0 I\otimes I)-\dfrac12\sum_j S_j^* S_j,\\
L_j^0&=- i S_j\qquad \text{and}\qquad
L_0^j= -i S_j^*\,.
\end{align*}
In Eq. \eqref{QSDE}, the coefficient $L^0_0$ is 
$$L^0_0= -i (H^{A}\otimes I + I \otimes H^{B} + 2 \lambda_0 I \otimes I) - \frac12  \sum_j V_j^* V_j \otimes I + I \otimes W_j^* W_j+ 2 V_j \otimes W_j^*\,$$
which can be also written
\begin{align*}
L^0_0=& -i\left( H^{A}\otimes I + I \otimes H^{B} + 2\lambda_0 I \otimes I +\dfrac i2 \sum_j V_j^* \otimes W_j - V_j \otimes W_j^*\right)\\&- \dfrac12\sum_j (V_j\otimes I + I \otimes W_j)^* (V_j\otimes I + I \otimes W_j)\,.
\end{align*}
Note that finally Eq.\eqref{QSDE} is exactly a QSDE of the form of Eq.\eqref{usualQSDE} with
$$
H_0^{A,B}=H^{A}\otimes I + I \otimes H^{B} +\dfrac i2 \sum_j V_j^* \otimes W_j - V_j \otimes W_j^*\quad \text{ and }\quad S_j=V_j\otimes I + I \otimes W_j\,.
$$
The result follows.
\end{proof}

\smallskip
Before treating an example where we study the entanglement created by the interaction with the environment, we want to highlight the results of the previous theorems with the following remarks. 

\begin{rem} One can wonder if we can recover the above result and the description of the created interaction Hamiltonian only by knowing the separate evolutions (that is when only $\mathcal H^A_S$ or $\mathcal H^B_S$ is involved). Using again the results of \cite{A-P} we can describe the separate evolution and we get
\begin{equation}\label{eq:usuelA}
d\td U_t^A=[-i(H^A+ \lambda_0 I)-\dfrac12\sum_j V_j^* V_j] \td U_t^Adt-i\sum_j V_j \td U_t^Ada_j^0(t)+V_j^*\td U_t^Ada_0^j(t)\,,
\end{equation}
which is the limit of $V_{[t/h]}^A=U_{[t/h]}^A...U^A_1$. The similar evolution of $\mathcal H_S^B$ coupled to the Fock space $\Phi$ is given by
\begin{equation}\label{eq:usuelB}
d\td U_t^B=[-i(H^B+ \lambda_0 I)-\dfrac12\sum_j W_j^* W_j] \td U_t^Bdt-i\sum_j W_j\td U_t^Bda_j^0(t)+W_j^*\td U_t^B\ da_0^j(t)\,,
\end{equation}
which is the limit of $V_{[t/h]}^B=U_{[t/h]}^B...U^B_1$. At this stage, with only \eqref{eq:usuelA} and \eqref{eq:usuelB}, it is not absolutely straightforward how to derive Eq.\eqref{QSDE}. In particular, it is not obvious how to describe the fact that the quantum field, at time $t$, acts first with $\mathcal H^A_S$ and second with $\mathcal H^B_S$

The only way to model such an evolution as the continuous time limit of the usual repeated interactions scheme is to consider that the free Hamiltonian of $\mathcal H^A_S\otimes\mathcal H^B_S$ is of the form
 $$H_0^{A,B}=H^{A}\otimes I + I \otimes H^{B} +\dfrac i2 \sum_j V_j^* \otimes W_j - V_j \otimes W_j^*$$
 which involves then an effective interaction $i/2 \sum_j V_j^* \otimes W_j - V_j \otimes W_j^*$ as proved in Theorem~\ref{main2}. This shows that our special model creates spontaneously an interaction Hamiltonian between $\mathcal H^A_S$ and $\mathcal H^B_S$ since they are supposed to be isolated at the beginning.
 \end{rem}
\smallskip

\begin{rem}
Note that the interaction Hamiltonian is not symmetric in $\mathcal H^A_S$ and $\mathcal H^B_S$ due to the fact that each auxiliary system $\mathcal H$ acts with $\mathcal H^A_S$ before $\mathcal H^B_S$. Somehow the evolution keeps the memory of the order of the interaction.
\end{rem}
\smallskip

\smallskip
 
In the next section, we illustrate our results with the study of the creation of entanglement in a simple physical model.

\section{Evolution of Entanglement for Spontaneous Emission}\label{physicalexample}

The physical model considered in this section is the spontaneous emission of photons. More precisely, the systems $\mathcal H^A_S$, $\mathcal H^B_S$ and $\mathcal H$ are $2$-level systems represented by $\CC^{2}$. The interaction between one of the spaces $\mathcal H^A_S$ or $\mathcal H^B_S$ and $\mathcal H$ is the exchange of energy, that is, the following.

 The free dynamics $H^A$, $H^B$ and $H^R$ are given by  the usual Pauli matrix $\sigma_z$ defined by
$$\left(\begin{array}{cc}
1&0\\
0&-1
\end{array}\right)\,.$$
The operators $V_1$ and $W_1$ are $V_1=W_1=a^1_0$.
Applying Theorem \ref{main}, the limit evolution is 
\begin{multline}\label{emission}
dU_t =  \left[-i ( \sigma_z \otimes I + I \otimes  \sigma_z + 2 I \otimes I) - \dfrac12  S^* S+\dfrac12 (a^0_1 \otimes a^1_0 - a^1_0\otimes a^0_1)\right] U_t dt\\
 \hfill -i  S U_t d a^0_1 (t) -iS^* U_t d a^1_0(t)\,,
\end{multline}
where $S= a^1_0 \otimes I + I \otimes a^1_0$.
\bigskip

In order to study the entanglement of a system evolving according to Eq.\eqref{emission}, we compute its Lindblad generator.
Indeed,  from the solution $(U_t)_{t \in \RR^+}$ of Eq.\eqref{emission}, we consider associate the semigroup of completely positive maps 
$(T_t)_{t \in \RR^+}$ defined by
$$
T_t(\rho)= Tr_{\rH}(U_t(\rho \otimes \vert \Omega \rangle \langle \Omega \vert)U^*_t)\,,
$$
for all state $\rho$ of $\mathcal H^A_S\otimes \mathcal H^B_S$ and all $t \in \RR^+$, where $\Omega$ represents the ground (or vacuum) state of the associated Fock space $\Phi$.  The infinitesimal generator of $(T_t)$ is then given by  
\begin{align*}
L(\rho)= &-i\left[\sigma_z \otimes I + I \otimes \sigma_z +\dfrac{i}2 (a^0_1 \otimes a^1_0 - a^1_0 \otimes a^0_1), \rho \right] + \dfrac12 \Big(2S\rho S^* - S^*S\rho-\rho S^*S\Big)\,.
\end{align*}
Note that this generator can also be simply recovered from the limit of the completely positive discrete-time semigroup associated to the completely positive operator $l(h)$ defined by
\begin{eqnarray}
l(h)(\rho)&=&\textrm{Tr}_\mathcal{H}(U(\rho\otimes\vert e_0\rangle\langle e_0\vert)U)\\
&=&\sum_{i}U^0_i(h)\,\rho\, U^0_i(h)^*\\
&=&\rho+h L(\rho)+\circ(h)\,.
\end{eqnarray}

\smallskip
Now we are in the position to study the entanglement between the system $\mathcal H_S^A$ and $\mathcal H_S^B$. In particular, we shall study the so-called \textit{entanglement of formation}  (see \cite{V} for an introduction). It is worth noticing that an explicit formula does not hold in general. In \cite{V}, an explicit formula has been derived for particular initial states. These initial states are called $X$ states for their matrix representations look like an $X$. A particular feature of such states is that this representation is preserved during the dynamics and the entanglement of formation can be computed explicitly in terms of the concurrence of Wooters \cite{W}.

In order to make concrete the $X$ representation, we consider the following basis of $\mathcal H^A_S \otimes \mathcal H^B_S$:$$\mathcal B=(\vert e_0 \otimes e_0\rangle,\vert e_0 \otimes e_1\rangle,\vert e_1 \otimes e_0\rangle,\vert e_1 \otimes e_1\rangle)\,.$$
A general $X$ state, in this basis, is then
$$\rho = \left( \begin{array}{cccc}
a&0&0&y\\
0&b&x&0\\
0&\overline{x} &c&0\\
\overline{y} &0&0&d
\end{array}\right)$$
 with the conditions $a,b,c,d$ non-negative reals such that $a+b+c+d=1$, $\vert y \vert^2 \leq ad$ and $\vert x \vert^2 \leq bc$. As proved in \cite{V}, the concurrence of Wooters of such a state is
\begin{equation}\label{concurrence}C(\rho)=2 \max(0, \vert y \vert - \sqrt{bc}, \vert x \vert - \sqrt{ad})\,\end{equation}
and its entanglement of formation is given by the general formula, shown by Wooters \cite{W},
\begin{equation}\label{entanglement}
E(\rho)= h\left( \dfrac{1+ \sqrt{1-C(\rho)^2}}2\right)\,,
\end{equation}
where $h(x)= -x \log_2 (x) - (1-x) \log_2( 1-x)$.

One can now compute the action of $L$ on a $X$ state and after computation we get
$$
L(\rho) = \left( \begin{array}{cccc}
\ol x + x + b+ c&0&0&y(-1-4i)\\
0&d-b-\overline{x}-x&d-c-x&0\\
0&d-c-\overline{x} &d-c&0\\
\overline{y}(4i-1) &0&0&-2d
\end{array}\right)\,.
$$
Using the development of $e^{tL}$ in serie, it is obvious to see that the $X$ representation is preserved during the evolution. Unfortunately, in general, the expression of $L^n(\rho)$ for all $n$ is not computable and we cannot obtain the expression of $e^{tL}(\rho)$ for all $\rho$. However, we are able to compute the expression of $e^{tL}(\rho)$ for the states defining the basis $\mathcal B$.

\smallskip\noindent
$\bullet$ A straightforward computation shows that $\vert e_0 \otimes e_0\rangle\langle e_0 \otimes e_0\vert$ is an invariant state of the dynamics (one can check that $L(\vert e_0 \otimes e_0\rangle\langle e_0 \otimes e_0\vert)=0)$ and there is no entanglement of formation.

\smallskip\noindent
$\bullet$ Consider now another initial state $\rho^{01}=\vert e_0 \otimes e_1\rangle\langle e_0 \otimes e_1\vert$ corresponding to the case $a=0$, $ b=1$, $c=0$, $d=0$, $x=0$ and $y=0$. This state represents the system $\mathcal H_S^A$ in its ground state and $\mathcal H_S^B$ in its excited state. One can easily check that we get for all $n\geq 1$
$$L^n(\rho^{01}) = \left( \begin{array}{cccc}
(-1)^{n+1}&0&0&0\\
0&(-1)^n&0&0\\
0&0&0 &0\\
0&0&0&0
\end{array}\right)\,.$$
This gives directly that
\begin{eqnarray}
\rho^{01}_t&=& e^{tL}(\vert e_0 \otimes e_1\rangle\langle e_0 \otimes e_1\vert) = \left( \begin{array}{cccc}
1-e^{-t}&0&0&0\\
0& e^{-t}&0&0\\
0&0 &0&0\\
0 &0&0&0
\end{array}\right)\,\\&=&\vert e_0\otimes (1-e^{-t})e_0+e^{-t}e_1\rangle\langle e_0\otimes (1-e^{-t})e_0+e^{-t}e_1\vert
\end{eqnarray}
The entanglement of formation is obviously zero. Which was to be expected as the initial state of $\mathcal H_S^A$ is $\vert e_0\rangle$  is invariant under the repeated interactions. In the two next cases we shall see effective creation of entanglement.

\smallskip\noindent
\begin{figure}[h!]
\begin{center}
\leavevmode
        \includegraphics[width=10cm,height=7cm]
        {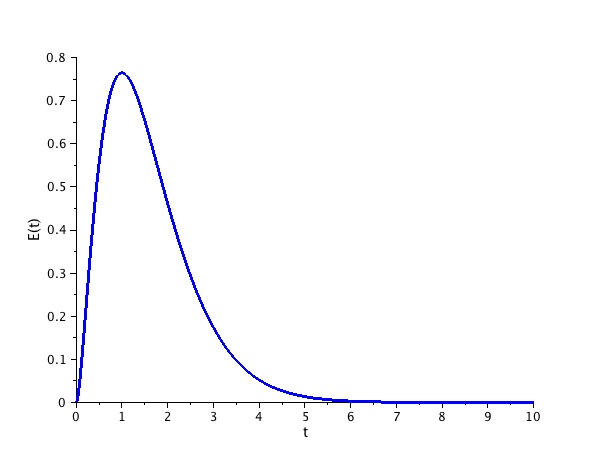}\\
     \vspace{-0.9cm} \strut
\end{center}
\vskip -0.1cm
\caption{\emph{Time evolution of Wooters' concurrence, initial state $e_1 \otimes e_0$}}
\end{figure}

$\bullet$ Consider the initial state $\rho^{10}=\vert e_1 \otimes e_0\rangle\langle e_1 \otimes e_0\vert$ corresponding to the case $a=0$, $b=0$, $c=1$, $d=0$, $x=0$ and $y=0$. For all $n\geq 1$, we get
$$L^n(\rho^{10}) = \left( \begin{array}{cccc}
(-1)^{n+1}(n^2-n+1)&0&0&0\\
0&(-1)^n(n-1)n&(-1)^n n&0\\
0&(-1)^n n&(-1)^n  &0\\
0&0&0&0
\end{array}\right)\,.$$
This way, we have for all time $t$,
\begin{equation}\label{rhot}
\rho^{10}_t= e^{tL}(\vert e_1 \otimes e_0\rangle\langle e_1 \otimes e_0\vert) = \left( \begin{array}{cccc}
1-(1+ t^2) e^{-t}&0&0&0\\
0&t^2 e^{-t}&-t e^{-t}&0\\
0&-te^{-t} &e^{-t}&0\\
0 &0&0&0
\end{array}\right)\,.
\end{equation}
In this case the entanglement of formation is then
\begin{equation}\label{entanglementrhot}
E(\rho^{10}_t)= h\left( \dfrac{1+ \sqrt{1-4 t^2 e^{-2t}}}2\right)\,.
\end{equation}
In particular, this quantity is positive for all $t>0$ (see figure). One can check that the maximum is reached at time $1$ when the state is
$$\left( \begin{array}{cccc}
1-2\ e^{-1}&0&0&0\\
0&e^{-1}&-e^{-1}&0\\
0&-e^{-1} &e^{-1}&0\\
0 &0&0&0
\end{array}\right)\,.$$
In this case we see that there is spontaneous creation of entanglement which increases until time $1$ and next decreases exponentially fast to zero, see Fig. 1.

\smallskip
\begin{figure}[]
\begin{center}
\leavevmode
        \includegraphics[width=10cm,height=7cm]
        {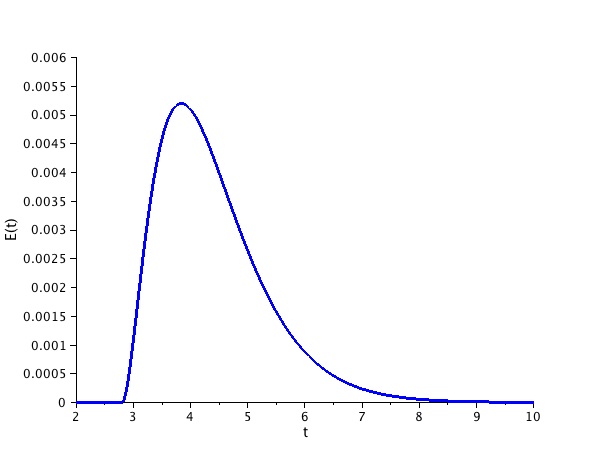}\\
     \vspace{-0.9cm} \strut
\end{center}
\vskip -0.1cm
\caption{\emph{Time evolution of Wooters' concurrence, initial state $e_1 \otimes e_1$}}
\end{figure}
$\bullet$ The last case concerns $\rho^{11}=\vert e_1 \otimes e_1\rangle\langle e_1 \otimes e_1\vert$. In particular, this corresponds to the case $a=0$, $b=0$, $c=0$, $d=1$, $x=0$ and $y=0$. After computations we get 
$$L^n(\rho^{11}) =(-1)^{n} \left( \begin{array}{cccc}
5\times 2^{n} -6-n(n+3)&0&0&0\\
0&-5(2^{n}-1)+n(n+3)&-2^{n+1}+n+2&0\\
0&-2^{n+1}+n+2&-2^n+1&0\\
0&0&0&2^n
\end{array}\right)\,.$$
This gives for all time $t$,
\begin{equation}\label{rhot}
\rho^{11}_t= \left( \begin{array}{cccc}
1-(t^2-4t+6) e^{-t}+5e^{-2t}&0&0&0\\
0&(t^2-4t+5) e^{-t}-5e^{-2t}&(2-t) e^{-t}-2te^{-2t}&0\\
0&(2-t) e^{-t}-2e^{-2t}&e^{-t}-e^{-2t}&0\\
0 &0&0&e^{-2t}
\end{array}\right)\,.
\end{equation}
The concurrence of Wooters is then
 $$C(\rho^{11}_t)= 2 \max [0, \vert (2-t) e^{-t}-2e^{-2t}\vert -\sqrt{(1-(t^2-4t+6) e^{-t}+5e^{-2t})e^{-2t}}]\,.$$
and the entanglement of formation is
\begin{equation}
E(\rho^{11}_t)= h\left( \dfrac{1+ \sqrt{1-C(\rho^{11}_t)^2}}2\right)\,.
\end{equation}
The behavior is mostly the same as in the previous case (see Fig 2), with the important difference that the entanglement, initially starting at 0, takes a strictly positive time to leave the value 0.

\section{Thermal Environment}\label{casthermique}

In this section, we want to investigate the bipartite model in presence of a thermal environment. To this end, we consider that the reference state of each copy of $\mathcal H$ is the Gibbs state 
$$
\omega_{\beta} = \dfrac1{Z}e^{-\beta H^R}\,,
$$
where $\beta$ is positive and $Z$ is a normalizing constant (as usual $\beta$ is the inverse of the temperature). In the orthonormal basis $\left\{e_0, \dots, e_n\right\}$ of  eigenvectors of the Hamiltonian $H^R$, the state $\omega_{\beta}$ is diagonal and is expressed as
\begin{equation}\label{Gibbs}
\omega_{\beta}=\sum_j\beta_j\vert e_j\rangle\langle e_j\vert\,,
\end{equation}
where $\b_j= e^{-\b \lambda_j}/Z$ with $\sum_j \b_j= 1$. 

Let us stress that the limit evolution described in \cite{A-P} is crucially related to the fact that the state of $\mathcal H$ is a pure state. With a general state of the form $\omega_\beta$, in order to compute the limit evolutionin terms of a unitary evolution on a Fock space, one has to consider a so-called of G.N.S. representation of the dynamics. This technics has been successfully developed in \cite{A-J} in order to derive the quantum Langevin equation associated to the action of  a quantum heat bath. 

\subsection{Limit Lindblad Generator}

Here we shall not describe such results and we focus only on the Lindblad generator. As in the previous section this generator can be obtained from the continuous-limit of the discrete one. To this end we define the discrete generator $l_{\beta}(h)$ including temperature  by
\begin{equation}\label{thermal}
l_{\beta}(h)(\rho)=\textrm{Tr}_\mathcal{H}(U(\rho\otimes\omega_{\beta})U)= \sum_{k}\beta_k\textrm{Tr}_\mathcal{H}(U(\rho\otimes\vert e_k\rangle\langle e_k\vert)U) = \sum_{j,k}\beta_k\,  U_{j}^k\, \rho\, U_j^{k*}\,.
\end{equation}
\begin{proposition}\label{thermique}
In terms of $h$, the asymptotic expression of $l_\beta(h)$ is given by
$$l_\beta(h)(\rho)=\rho+hL_\beta(\rho)+\circ(h),$$
where
\begin{align}\label{lindbladthermique}
L_{\beta}(\rho)=&  -i\left[H^A\otimes I + I \otimes H^B  + \dfrac{i}2  \sum_{j=1}^N (\beta_j- \beta_0)  (V_j \otimes W_j^* - V_j^* \otimes W_j), \rho \right]\\
\nonumber&- \dfrac12 \sum_{j=1}^N \beta_j \big(S_jS_j^*\rho + \rho S_jS_j^*- 2 S_j^* \rho S_j \big)- \dfrac12 \sum_{j=1}^N \beta_0 \big(S^*_j S_j\rho+   \rho S^*_j S_j   -2 S_j \rho S_j^*\big)\,,
\end{align}
where $S_j = V_j \otimes I + I \otimes W_j$.

Furthermore, the interaction Hamiltonian between $\mathcal H^A_S$ and $\mathcal H^B_S$ created by repeated interactions with the environment is
$$ \dfrac{i}2  \sum_{j=1}^N (\beta_j- \beta_0)  (V_j \otimes W_j^* - V_j^* \otimes W_j)\,.$$
\end{proposition}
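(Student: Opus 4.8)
The plan is to substitute the order-$h$ expansions of the block coefficients $U^i_j(h)$ obtained in \eqref{uh1}--\eqref{uh5} directly into the exact formula \eqref{thermal}, namely $l_\beta(h)(\rho)=\sum_{j,k}\beta_k\,U^k_j\,\rho\,(U^k_j)^*$, and to collect the result order by order in $h$. The key bookkeeping observation is that each block carries a definite power of $h$: the diagonal blocks $U^0_0$ and $U^k_k$ equal $I\otimes I$ plus a correction of order $h$; the blocks $U^j_0=-i\sqrt h\,S_j^*$ and $U^0_j=-i\sqrt h\,S_j$ are of order $\sqrt h$; and the remaining off-diagonal blocks $U^k_j$ (with $j,k\geq 1$, $j\neq k$) are of order $h$. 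Since every summand has the sandwich form $U^k_j\,\rho\,(U^k_j)^*$, the last family only contributes at order $h^2$ and is discarded, while the $\sqrt h$ blocks contribute at order $h$ through $U^j_0\rho (U^j_0)^*=h\,S_j^*\rho S_j$ (weight $\beta_j$) and $U^0_j\rho (U^0_j)^*=h\,S_j\rho S_j^*$ (weight $\beta_0$).

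Second, I would gather the two orders that matter. At order $h^0$ only the diagonal blocks survive, each contributing $\rho$ with weight $\beta_k$, so the constant term is $\bigl(\sum_k\beta_k\bigr)\rho=\rho$ by \eqref{Gibbs}; since no block carries a half-integer power other than the $\sqrt h$ ones (whose sandwiches are already order $h$), there is no order $h^{1/2}$ term. At order $h$ there are exactly four families: writing $U^0_0=I\otimes I+h\,A_0$ and $U^k_k=I\otimes I+h\,A_k$, the diagonal blocks give $\beta_0\,(A_0\rho+\rho A_0^*)+\sum_{k\geq1}\beta_k\,(A_k\rho+\rho A_k^*)$, and the two $\sqrt h$ families give $\beta_0\sum_j S_j\rho S_j^*+\sum_{k\geq1}\beta_k\,S_k^*\rho S_k$.

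The main step, and the only real obstacle, is the algebraic repackaging of $A_0$ and $A_k$ into a Hamiltonian part plus a Gorini--Kossakowski--Lindblad--Sudarshan dissipator. Here I would use the two identities $S_j^*S_j=V_j^*V_j\otimes I+I\otimes W_j^*W_j+V_j^*\otimes W_j+V_j\otimes W_j^*$ and $S_jS_j^*=V_jV_j^*\otimes I+I\otimes W_jW_j^*+V_j\otimes W_j^*+V_j^*\otimes W_j$ to rewrite the non-Hermitian diagonal corrections. The crucial point is that the unpaired cross term, which appears as $2V_j\otimes W_j^*$ in $U^0_0$ but as $2V_j^*\otimes W_j$ in $U^k_k$, forces the splitting $A_0=-i(H^A\otimes I+I\otimes H^B+2\lambda_0 I\otimes I)-\tfrac12\sum_j S_j^*S_j-\tfrac12\sum_j(V_j\otimes W_j^*-V_j^*\otimes W_j)$ and $A_k=-i(H^A\otimes I+I\otimes H^B+2\lambda_k I\otimes I)-\tfrac12 S_kS_k^*+\tfrac12(V_k\otimes W_k^*-V_k^*\otimes W_k)$, in which the anti-Hermitian remainder carries \emph{opposite signs} for the ground channel ($k=0$) and the excited channels ($k\geq1$). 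Feeding these into $A\rho+\rho A^*$, the Hermitian parts $-\tfrac12 S_j^*S_j$ and $-\tfrac12 S_jS_j^*$ combine with the two $\sqrt h$ families into the two dissipators $-\tfrac12\beta_0\sum_j(S_j^*S_j\rho+\rho S_j^*S_j-2S_j\rho S_j^*)$ and $-\tfrac12\sum_j\beta_j(S_jS_j^*\rho+\rho S_jS_j^*-2S_j^*\rho S_j)$ of \eqref{lindbladthermique}, while the anti-Hermitian remainders turn into commutators $-i[\,\cdot\,,\rho]$. Finally I would check that the pure free terms carry total weight $\sum_k\beta_k=1$ and hence give $H^A\otimes I+I\otimes H^B$, that the scalar terms $2\lambda_k I\otimes I$ aggregate to a multiple of the identity and drop out of the commutator, and that the opposite-sign anti-Hermitian remainders assemble precisely into $\tfrac i2\sum_j(\beta_j-\beta_0)(V_j\otimes W_j^*-V_j^*\otimes W_j)$, the claimed effective interaction Hamiltonian. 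The one genuinely delicate piece is tracking these sign conventions on the cross terms, since it is exactly the sign flip between the $\beta_0$ and $\beta_j$ channels that produces the difference $\beta_j-\beta_0$.
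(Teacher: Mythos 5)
Your proposal is correct and takes essentially the same route as the paper's own proof: both substitute the expansions \eqref{uh1}--\eqref{uh5} into the exact formula \eqref{thermal}, keep the order-$h$ contributions (diagonal sandwiches plus the two $\sqrt h$ families $S_j\rho S_j^*$ and $S_j^*\rho S_j$), and regroup into a commutator plus the two dissipators, with the opposite signs of the cross terms in the $\beta_0$ and $\beta_j$ channels producing the factor $\beta_j-\beta_0$. Your explicit Hermitian/anti-Hermitian splitting of the diagonal corrections $A_0$ and $A_k$ is simply a more detailed account of the single rewriting step the paper performs.
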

\begin{proof}

Plugging the asymptotic expressions \eqref{uh1} -- \eqref{uh5} into \eqref{thermal} and putting 

\noindent $S_j=V_j\otimes I+I\otimes W_j$ for all $j \geq 1$, we get, up to terms in $h^{3/2}$ or higher,
\begin{align*}
l_{\beta}(\rho)=\rho + h \Big( -i\left[H^A\otimes I + I \otimes H^B , \rho \right] 
&- \dfrac12 \sum_{j=1}^N \beta_j (V_jV_j^*\otimes I +I\otimes W_j W_j^*+2V_j^* \otimes W_j)\rho\\
&- \dfrac12 \sum_{j=1}^N \beta_j \rho (V_jV_j^*\otimes I +I\otimes W_j W_j^*+2V_j^* \otimes W_j)^* \\
&- \dfrac12 \sum_{j=1}^N\beta_0 (V_j^*V_j\otimes I +I\otimes W_j^* W_j+2V_j \otimes W_j^*)\rho \\ 
&- \dfrac12 \sum_{j=1}^N \beta_0 \rho (V_j^*V_j\otimes I +I\otimes W_j^* W_j+2V_j \otimes W_j^*)^*\\
 &+  \sum_{j=1}^N \beta_0 S_j \rho S_j^* + \beta_j S_j^* \rho S_j \Big),
 \end{align*}
 which can be written in the usual form
 \begin{align*}
l_{\beta}(\rho)=& \rho + h \Big( -i\left[H^A\otimes I + I \otimes H^B  + \dfrac{i}2  \sum_{j=1}^N (\beta_j- \beta_0)   (V_j \otimes W_j^* - V_j^* \otimes W_j), \rho \right]\\
&- \dfrac12 \sum_{j=1}^N \beta_j \big(S_jS_j^*\rho + \rho S_jS_j^*- 2 S_j^* \rho S_j \big)- \dfrac12 \sum_{j=1}^N \beta_0 \big(S^*_j S_j\rho+  \beta_0 \rho S^*_j S_j   -2 S_j \rho S_j^*\big) \Big)\,.
\end{align*}
This way, the interaction Hamiltonian naturally appears in the Hamiltonian part.
\end{proof}

In the next section, in a particular example, we study the asymptotic behavior of $\mathcal H_S^A\otimes\mathcal H_S^B$.

\subsection{Return to Equilibrium in a Physical Example, Thermalization}\label{physicalexample}
Here, we consider that  $\mathcal H^A_S$, $\mathcal H^B_S$ and $\rH$ are $\CC^{N+1}$. We assume that the free evolutions satisfy $H^A=H^B=H^R$.
The total Hamiltonian operators are
\begin{eqnarray}H^A_{tot}&=&H^A \otimes I \otimes I + I \otimes I \otimes H^R+\frac{1}{\sqrt h}\sum_{j =1}^N a^0_j \otimes I \otimes a^j_0 + a^j_0 \otimes I \otimes a^0_j\,,\\
H^B_{tot}&=& I \otimes H^B \otimes I +  I \otimes I \otimes H^R+ \frac{1}{\sqrt h}\sum_{j=1}^N I \otimes a^0_j \otimes a^j_0 + I \otimes a^j_0 \otimes a^0_j.
\end{eqnarray}
This is a generalization of the spontaneous emission (see \cite{A-J}).

Applying Proposition \ref{thermique} we get the expression of the Lindblad generator\begin{align*}
L_{\beta}(\rho)=&  -i\left[H^A\otimes I + I \otimes H^B  + \dfrac{i}2  \sum_{j=1}^N (\beta_j- \beta_0)  (a^j_0 \otimes a^0_j - a^0_j \otimes a^j_0), \rho \right]\\
\nonumber&- \dfrac12 \sum_{j=1}^N \beta_j \big(S_jS_j^*\rho + \rho S_jS_j^*- 2 S_j^* \rho S_j \big)- \dfrac12 \sum_{j=1}^N \beta_0 \big(S^*_j S_j\rho+   \rho S^*_j S_j   -2 S_j \rho S_j^*\big)\,,
\end{align*}
where $S_j = a^j_0 \otimes I + I \otimes a^j_0$. 

\smallskip
Now, we are in the position to consider the problem of return to equilibrium. More precisely, we shall show that there exists a unique state $\rho_\infty$ such that
$$
\lim_{t \rightarrow +\infty} \textrm{Tr}(e^{tL_{\beta}}(\rho) X) = \textrm{Tr}(\rho_{\infty} X)\,,
$$
for all initial state $\rho$ and all observable $X$ on $\mathcal H^A_S \otimes \mathcal H^B_S$. The state $\rho_{\infty}$ is an invariant state.

In the case of finite dimensional Hilbert spaces, a general result, proved by Frigerio and Verri \cite{F-V} and extended by Fagnola and Rebolledo \cite{F-R} gives a sufficient condition, in the case where the system has a faithful invariant state $\rho_{\infty}$. The criterion is the following. Let $L$, defined by
$$
L(\rho)=-i[H,\rho]+\sum_j-\frac{1}{2}\{C_j^\star C_j,\rho\}+C_j\rho C_j^\star\,,
$$
be the Lindblad generator of a quantum dynamical system. The property of return to equilibrium is satisfied if
\begin{eqnarray}
\label{comm}\left\{H, L_j, L_j^* ;   j=1, \ldots,N\right\}' = \quad \left\{ L_j, L_j^* ;   j=1, \ldots, N\right\}'\,,
\end{eqnarray}
where the notation $\{\}'$ refers to the commutant of the ensemble.

In our context we shall prove the following return to equilibrium result.

\begin{theorem}\label{equilibrium}
On $\mathcal H^A_S\otimes\mathcal H^B_S$, the dynamical system whose Lindblad generator is given by
\begin{align*}
L_{\beta}(\rho)=&  -i\left[H^A\otimes I + I \otimes H^B  + \dfrac{i}2  \sum_{j=1}^N (\beta_j- \beta_0)  (a^j_0 \otimes a^0_j - a^0_j \otimes a^j_0), \rho \right]\\
\nonumber&- \dfrac12 \sum_{j=1}^N \beta_j \big(S_jS_j^*\rho + \rho S_jS_j^*- 2 S_j^* \rho S_j \big)- \dfrac12 \sum_{j=1}^N \beta_0 \big(S^*_j S_j\rho+   \rho S^*_j S_j   -2 S_j \rho S_j^*\big)\,,
\end{align*}
where $S_j = a^j_0 \otimes I + I \otimes a^j_0$, has the property of return to equilibrium. 

Moreover, the limit invariant state is
$$\rho_{\beta}= \dfrac{e^{-\beta (H^A \otimes I + I \otimes H^B)}}{Z}\,,$$
where $Z$ is a normalizing constant.
\end{theorem}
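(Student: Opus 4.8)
The plan is to recast $L_\beta$ in standard Lindblad form, read off the Hamiltonian and the noise operators, and then apply the Frigerio--Verri / Fagnola--Rebolledo criterion \eqref{comm} recalled above from \cite{F-V,F-R}. Setting $\mathcal D[C](\rho)=C\rho C^*-\tfrac12(C^*C\rho+\rho C^*C)$, a direct regrouping gives $L_\beta(\rho)=-i[H,\rho]+\sum_j \beta_j\,\mathcal D[S_j^*](\rho)+\beta_0\,\mathcal D[S_j](\rho)$, with $H=H^A\otimes I+I\otimes H^B+\tfrac i2\sum_j(\beta_j-\beta_0)(a^j_0\otimes a^0_j-a^0_j\otimes a^j_0)$ and $S_j=a^j_0\otimes I+I\otimes a^j_0$. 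The noise operators are therefore, up to the positive factors $\sqrt{\beta_j},\sqrt{\beta_0}$, the family $\{S_j,S_j^*\}$, whose commutant these scalars do not change. Since $L_\beta$ will be shown to admit a faithful invariant state, the criterion reduces the problem to proving that this commutant is as small as possible once $H$ is adjoined, namely $\{H,S_j,S_j^*\}'=\CC\,I$; this triviality is exactly what forces the invariant state to be unique and the semigroup to relax to it.

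Faithfulness of $\rho_\beta$ is immediate, every $\beta_k=e^{-\beta\lambda_k}/Z$ being strictly positive. For invariance I would split $L_\beta(\rho_\beta)$ into Hamiltonian and dissipative parts. Writing $\rho_\beta=\omega_\beta\otimes\omega_\beta$ with $\omega_\beta=\sum_k\beta_k\,a^k_k$, the free part $H^A\otimes I+I\otimes H^B$ commutes with $\rho_\beta$ because the latter is a function of it (recall $H^A=H^B=H^R$); the created interaction term commutes with $\rho_\beta$ as well, since each summand $a^j_0\otimes a^0_j-a^0_j\otimes a^j_0$ only couples $e_j\otimes e_0$ and $e_0\otimes e_j$, on which $\rho_\beta$ acts by the same scalar $e^{-\beta(\lambda_j+\lambda_0)}/Z$ (here the equality $H^A=H^B$ is essential). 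Hence $[H,\rho_\beta]=0$. For the dissipative part I would use the elementary relations $\omega_\beta\,a^0_j=\beta_j\,a^0_j$, $\omega_\beta\,a^j_0=\beta_0\,a^j_0$, $a^0_j a^j_0=a^j_j$, $a^j_0 a^0_j=a^0_0$ and check that $\beta_j\mathcal D[S_j^*](\rho_\beta)+\beta_0\mathcal D[S_j](\rho_\beta)=0$ for each $j$: expanding $S_j=A_j+B_j$ with $A_j=a^j_0\otimes I$ and $B_j=I\otimes a^j_0$, the purely-$A$, purely-$B$ and the two cross contributions each cancel between the $\beta_j$- and $\beta_0$-weighted terms. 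This is a finite, bounded computation.

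The real work is the commutant identity $\{H,S_j,S_j^*\}'=\CC I$, and here lies the main obstacle. The subtlety is that the noise operators alone do not generate enough: the flip $F$ on $\mathcal H_S^A\otimes\mathcal H_S^B$ given by $F(u\otimes v)=v\otimes u$ satisfies $FS_jF=S_j$, so $F\in\{S_j,S_j^*\}'$ and this commutant is nontrivial (it respects the splitting into symmetric and antisymmetric tensors). What rescues the argument is precisely the created interaction term, for $FM_jF=-M_j$ with $M_j=a^j_0\otimes a^0_j-a^0_j\otimes a^j_0$, so $F$ does not commute with $H$ and is excluded from $\{H,S_j,S_j^*\}'$. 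To convert this into a proof I would take $X\in\{H,S_j,S_j^*\}'$ and proceed in three steps. First, $[X,H^A\otimes I+I\otimes H^B]=0$, together with the simple (non-resonant) spectrum of $H^R$, forces $X$ to preserve each joint energy eigenspace, in particular the line $\CC\,e_0\otimes e_0$ and each plane $\mathrm{span}\{e_j\otimes e_0,e_0\otimes e_j\}$. Second, on such a plane $[X,M_j]=0$ confines $X$ to the form $\alpha_j I+\gamma_j M_j$, while $[X,S_j]=0$, read on $e_j\otimes e_0$ and $e_0\otimes e_j$ (both carried by $S_j$ to $e_0\otimes e_0$), yields $\alpha_j\pm\gamma_j=x_{00}$, hence $\gamma_j=0$ and $\alpha_j=x_{00}$, where $x_{00}$ is the scalar by which $X$ acts on $e_0\otimes e_0$. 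Third, $[X,S_k^*]=0$ propagates this common value upward, since $S_k^*(e_j\otimes e_0)=e_j\otimes e_k$ sweeps out the entire excited basis; thus $X=x_{00}I$. This establishes $\{H,S_j,S_j^*\}'=\CC I$. With a faithful invariant state in hand and a trivial commutant, \eqref{comm} and \cite{F-V,F-R} give uniqueness of the invariant state and the convergence $\mathrm{Tr}(e^{tL_\beta}(\rho)X)\to\mathrm{Tr}(\rho_\beta X)$ for every initial $\rho$ and every observable $X$; since $\rho_\beta$ was shown invariant, it is that unique state and the limit, which is the asserted return to equilibrium.
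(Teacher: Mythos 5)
Your central algebraic observation is correct, and it is in fact more significant than you may realize: the flip $F(u\otimes v)=v\otimes u$ does commute with every $S_k$ and $S_k^*$, so the commutant $\left\{S_k,S_k^*;k=1,\ldots,N\right\}'$ is \emph{not} trivial. This directly contradicts the paper's own proof, whose long matrix computation concludes that this commutant reduces to $\CC\,I\otimes I$; the flip, whose components in the decomposition $K=\sum_{i,j}K^i_j\otimes a^i_j$ are $K^i_j=a^j_i$, satisfies every one of the commutation equations listed there, so the paper's final deduction step (``all the matrices $K^j_0$, $K^0_j$ and $K^j_k$ are all equal to $0$'') is erroneous. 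Your repair --- that the created interaction term $M_j=a^j_0\otimes a^0_j-a^0_j\otimes a^j_0$ anticommutes with $F$, so adjoining $H$ excludes the flip and one can hope for $\{H,S_j,S_j^*\}'=\CC\,I$ --- is the right algebraic fact, and your eigenspace argument for it is essentially sound, though it uses a simple, non-resonant spectrum for $H^R$, an assumption the paper never states. (Some hypothesis of this kind is genuinely needed: if all $\lambda_j$ are equal then $\beta_j=\beta_0$, the created interaction vanishes, $F$ becomes a conserved observable, and return to equilibrium fails; so here both you and the paper are gliding over a real restriction.)

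The genuine gap is in your final step. The criterion \eqref{comm} of \cite{F-V,F-R}, as recalled in the paper, is the \emph{equality} $\{H,L_j,L_j^*\}'=\{L_j,L_j^*\}'$, not the statement that triviality of the left-hand side suffices. Your own flip argument proves that this equality is \emph{false} for the present model: $F$ lies in $\{S_j,S_j^*\}'$ but not in $\{H,S_j,S_j^*\}'$. You therefore establish that the hypothesis of \eqref{comm} fails, and then invoke \eqref{comm} anyway --- as written, the conclusion does not follow. The implication you actually need (faithful invariant state together with $\{H,L_j,L_j^*\}'=\CC\,I$ implies return to equilibrium) is true in finite dimension, but it is a different, stronger statement that must be justified on its own: either via the Frigerio--Verri theory in its full form --- the fixed-point algebra equals $\{H,L_j,L_j^*\}'$, the decoherence-free algebra $\mathcal{N}$ carries a group of automorphisms which in finite dimension is inner with self-adjoint generator inside the fixed points, so a trivial fixed-point algebra forces $\mathcal{N}=\CC\,I$ and hence convergence --- or via the Fagnola--Rebolledo characterization $\mathcal{N}=\{\delta_H^n(L_j),\delta_H^n(L_j^*):n\geq0\}'$ with $\delta_H=[H,\cdot\,]$, checking that this commutant is trivial (note $F$ indeed fails to commute with $\delta_H(S_j)$, since $F[H,S_j]F=[H_0,S_j]-[H_I,S_j]\neq[H,S_j]$ whenever $[H_I,S_j]\neq0$). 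With one of these replacements your proof closes, and would then constitute a correct proof where the paper's is not; without it, the argument stops one theorem short of the conclusion.
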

\begin{proof}
Firstly, one can check that $\rho_{\beta}$ is a faithful invariant state since
$$L_{\beta}(\rho_{\beta})=0\,.$$
The proof is then based on the result of Fagnola and Rebolledo by showing that the commutants
$$
\big\{ H^A\otimes I + I \otimes H^B  + \dfrac{i}2  \sum_{j=1}^N (\beta_j- \beta_0)  (a^j_0 \otimes a^0_j - a^0_j \otimes a^j_0), S_k, S_k^* ;   k=1, \ldots, N\big\}'$$ and $$\left\{ S_k, S_k^* ;   k=1, \ldots, N\right\}'
$$ 
are simply trivial.

Recall that in this physical system the operators $S_k$ are $a^k_0\otimes I + I \otimes a^k_0$ for all $k~\geq~1$. Let us prove now that  $\left\{ S_k, S_k^* ;   k=1, \ldots, N\right\}'$ is trivial.
Consider an element $K$ of this commutant. This element $K$ can be written with respect to the canonical basis $(a^i_j)_{i,j=0, ...,N}$ as
$$K= \sum_{i,j=0}^N K^i_j \otimes a^i_j\,,$$
where the $K^i_j$'s are operators on $\CC^{N+1}$. Since the operators $K$ and $S_k$ commute for all $k\geq 1$, we get the equality between
\begin{align*}
K S_k=  \left(\sum_{i,j=0}^N K^i_j \otimes a^i_j \right) \left( a^k_0\otimes I + I \otimes a^k_0 \right)&= \sum_{i,j=0}^N K^i_j a^k_0 \otimes a^i_j + \sum_{i,j=0}^N K^i_j \otimes a^i_j a^k_0\\
&= \sum_{i,j=0}^N K^i_j a^k_0 \otimes a^i_j + \sum_{j=0}^N K^0_j \otimes  a^k_j
\end{align*}
and
\begin{align*}
S_k K=  \left( a^k_0\otimes I + I \otimes a^k_0 \right) \left(\sum_{i,j=0}^N K^i_j \otimes a^i_j \right)&= \sum_{i,j=0}^N a^k_0 K^i_j \otimes a^i_j + \sum_{i,j=0}^N K^i_j \otimes a^k_0 a^i_j  \\
&=\sum_{i,j=0}^N a^k_0 K^i_j \otimes a^i_j  + \sum_{i=0}^N K^i_k \otimes  a^i_0\,.
\end{align*}
From the commutation of $K$ and $S_k^*$, we also have the equality between
\begin{align*}
K S_k^*=  \left(\sum_{i,j=0}^N K^i_j \otimes a^i_j \right) \left( a^0_k\otimes I + I \otimes a^0_k \right)&= \sum_{i,j=0}^N K^i_j a^0_k \otimes a^i_j + \sum_{i,j=0}^N K^i_j \otimes a^i_j a^0_k\\
&= \sum_{i,j=0}^N K^i_j a^0_k \otimes a^i_j + \sum_{j=0}^N K^k_j \otimes  a^0_j
\end{align*}
and
\begin{align*}
S_k^* K=  \left( a^0_k\otimes I + I \otimes a^0_k \right) \left(\sum_{i,j=0}^N K^i_j \otimes a^i_j \right)&= \sum_{i,j=0}^N a^0_k K^i_j \otimes a^i_j + \sum_{i,j=0}^N K^i_j \otimes a^0_k a^i_j  \\
&=\sum_{i,j=0}^N a^0_k K^i_j \otimes a^i_j  + \sum_{i=0}^N K^i_0 \otimes  a^i_k\,.
\end{align*}
From these equalities and since the operators $(a^i_j)_{i,j=0, ...,N}$ form a basis, the following system of equations is obtained
for $k= 1, \ldots, N$,
\begin{align*}
K^0_0 a^k_0 &= a^k_0 K^0_0 + K^0_k\\
a^0_k K^0_0&= K^0_0 a^0_k + K^k_0\,,
\end{align*}
for $j,k,l= 1, \ldots, N$ with $k \neq j$
\begin{align*}
K^0_j a^l_0 &= a^l_0 K^0_j\\
a^0_k K^0_j&= K^0_j a^0_k + K^k_j\\
K^0_j a^0_j + K^j_j&= a^0_j K^0_j + K^0_0\,,
\end{align*}
and
\begin{align*}
K_0^j a_l^0 &= a_l^0 K_0^j\\
 K_0^j a_0^k &= a_0^k K_0^j+ K^j_k\\
K_0^j a_0^j + K^0_0&= a^j_0 K_0^j + K^j_j\,,
\end{align*}
and for $i,j,k,l=1,\ldots, N$ with $k \neq i$ and $l \neq j$
\begin{align*}
K_j^i a^k_0 &= a^k_0 K_j^i\\
K^i_j a_l^0 &= a_l^0 K^i_j\\
K_j^i a^i_0 + K^0_j &= a^i_0 K_j^i\\
K_j^i a^0_j &= a^0_j K_j^i + K^i_0\,.
\end{align*}
We now study all these equations in order to prove that the $K^i_j$'s are all equal to $0$.
Note that the commutation of a matrix $M=(m_{ij})_{i,j=0,\ldots,N}$ with $a^k_0$ for $k=1,\ldots,N$ implies that for all $p\geq 1$ and $q=0,\ldots,N$ with $q\neq k$
$$ 
m_{00} = m_{kk} \ ,\quad m_{p,0} = 0 \quad \text{ and } \quad m_{k,q}=0\,.
$$
The commutation of $M$ with $a^0_k$ gives that for all $p\geq 1$ and $q=0,\ldots,N$ with $q\neq k$
$$ 
m_{00} = m_{kk} \ ,\quad m_{0,p} = 0 \quad \text{ and } \quad m_{q,k}=0\,.
$$
Thus since $K^0_j a^l_0 = a^l_0 K^0_j$ for all $j,l=1,\ldots, N$, the matrices $K^0_j$ are of the form
$$
\left(\begin{array}{cccc}
m_{00}& m_{01}&\hdots&m_{0N}\\
0&m_{00}& 0 &0 \\
\vdots & \ddots &\ddots & 0 \\
0 & \hdots& 0&m_{00}
 \end{array}\right)\,.
 $$
By the same way, from $K_0^j a_l^0 = a_l^0 K_0^j$ for all $j,l=1,\ldots, N$, we deduce that the matrices $K^j_0$ are of the form
$$
\left(\begin{array}{cccc}
m_{00}& 0&\hdots&0\\
m_{10}&m_{00}& 0 &0 \\
\vdots & \ddots &\ddots & 0 \\
m_{N0} & \hdots& 0&m_{00}
 \end{array}\right)\,.
 $$
Consider now the equations on $K^i_j$ for $i,j \neq 0$. Since we get $K_j^i a^k_0 = a^k_0 K_j^i$ and $K^i_j a_l^0 = a_l^0 K^i_j$ for $k,l= 1,\ldots, N$ with $k \neq i$ and $l \neq j$, the matrix $K^i_j$ is a diagonal matrix whose coefficients are all equal to $m_{00}$
except the column $j$ and the row $i$ with for the moment only zero coefficients on the first row and the first column.

In the following, the coefficients of a matrix $K^i_j$ are denoted by $(m^{ij}_{kl})_{k,l=0, \ldots,N}$.
We work then on the equation $K_j^i a^i_0 + K^0_j = a^i_0 K_j^i$. This equality gives that the diagonal coefficients of $K^0_j$ are $0$ and for $l=1, \ldots, N$ with $l\neq i$,
$$m_{00}^{ij}= m^{ij}_{ii}+ m^{0j}_{0i} \quad \text{ and }\quad m^{0j}_{0l}=-m^{ij}_{il}\,.$$
Then from $K_j^i a^0_j = a^0_j K_j^i + K^i_0$  we deduce that the diagonal coefficients of $K^i_0$ are $0$ and, for $l=1, \ldots, N$ with $l\neq j$,
$$m_{00}^{ij}= m^{ij}_{jj}+ m^{i0}_{0j} \quad \text{ and }\quad m^{i0}_{l0}=-m^{ij}_{lj}\,.$$
From the equalities $ K_0^j a_0^k = a_0^k K_0^j+ K^j_k$ and $a^0_k K^0_j= K^0_j a^0_k + K^k_j$ with $k\neq j$, we finally obtain that all the matrices $K^j_0$, $K^0_j$ and $K^j_k$ are all equal to $0$ for $j\neq k$. For $j=k$, the equalities 
$K_0^j a_0^j + K^0_0= a^j_0 K_0^j + K^j_j$ allow us to conclude that the only non zero operators are the $K^j_j$'s for $j=0,\ldots, N$ and all equal to $m^{00}_{00} I$.

Hence, we have proved that the commutant $\left\{ S_k, S_k^* ;   k=1, \ldots, N\right\}'$ is reduced to the operators of the form $\lambda I \otimes I$ with $\lambda$ in $\CC$. Then the commutant 
$$\left\{ H^A\otimes I + I \otimes H^B  + \dfrac{i}2  \sum_{j=1}^N (\beta_j- \beta_0)  (a^j_0 \otimes a^0_j - a^0_j \otimes a^j_0), S_k, S_k^* ;   k=1, \ldots, N\right\}'$$ is by definition a subset of $\left\{ S_k, S_k^* ;   k=1, \ldots, N\right\}'$. Therefore it is trivial too. This proves that the system has the property of return to equilibrium by the result in \cite{F-R}. 
\end{proof}

Since this state $\rho_{\beta}$ is the invariant state of $\mathcal H^A_S\otimes \mathcal H^B_S$, one deduces the thermalization of $\mathcal H^A_S$ and $\mathcal H^B_S$ by the environment.

\end{document}